\pgfplotsset{compat=newest} 
\pgfplotsset{plot coordinates/math parser=false} 
\DeclareMathOperator{\Span}{span}
\newtheorem{thm}{Theorem}
\newtheorem{lem}{Lemma}
\newtheorem{prop}{Proposition}
\title{\LARGE \bf
Geometric stability considerations of the ribosome flow model with pool$^*$
}
\author{Wolfgang Halter$^{1}$, Jan Maximilian Montenbruck$^{1}$ and Frank Allg\"ower$^{1}$% <-this % stops a space
\thanks{*Supported by the research cluster $\text{BW}^2$ (www.bwbiosyn.de) of the Ministry for Science, Research and Art Baden-W\"urttemberg}% <-this % stops a space
\thanks{$^{1}$Authors are with Institute for Systems Theory and Automatic Control, 70569 Stuttgart, Germany.
        {\tt\small E-Mail of corresponding author: halter@ist.uni-stuttgart.de}.}%
}
\begin{document}

\maketitle
\thispagestyle{empty}
\pagestyle{empty}

\begin{abstract}                % Abstract of not more than 250 words.
In order to better understand the process of gene translation, the ribosome flow model (RFM) with pool was introduced recently. This model describes the movement of several ribosomes along an mRNA template and simultaneously captures the dynamics of the finite pool of ribosomes. Studying this system with respect to the number and stability of its equilibria was so far based on monotone systems theory \cite{Margaliot2012}. We extend the results obtained therein by using a geometric approach, showing that the equilibria of the system constitute a normally hyperbolic invariant submanifold. Subsequently, we analyze the Jacobi linearization of the system evaluated at the equilibria in order to show that the equilibria are asymptotically stable relative to certain affine subspaces. As this approach does not require any monotonicity features of the system, it may also be applied for more complex systems of the same kind such as bi-directional ribosome flows or time-varying template numbers.
\end{abstract}

%===============================================================================
\section{Introduction}
The ribosome flow model (RFM) as presented in \cite{Reuveni2011} describes the movement of ribosomes along an mRNA template and can be used to study the dynamics of mRNA translation, an important step in the process of protein synthesis. In order to better understand and eventually design genetic regulatory networks (GRNs), the RFM with pool can be used to provide a simulation framework which not only simulates the production of certain proteins but also takes into account the allocation of the ribosomes. Thus, a better evaluation of the performance of artificial GRNs, an important topic in the field of synthetic biology (\cite{Ceroni2015}), will be possible.\\
Several versions of the RFM have been proposed recently, all of them built upon the works \cite{MacDonald1968} and \cite{Heinrich1980} where a probabilistic model of a growth center moving along a nucleic acid template is considered. This model, also known as the totally asymmetric exclusion process (TASEP) was then simplified and adapted in \cite{Reuveni2011} to obtain the RFM, a deterministic description of the movement of several ribosomes along a strand of mRNA. For this classical RFM, where the amount of ribosomes is assumed to be abundant and only a single mRNA is studied, several results on model properties such as uniqueness of the steady state and convergence to this equilibrium point were presented in \cite{Margaliot2012} wherein the authors make use of the theory of monotone systems and the contraction principle. Recently, \cite{Raveh2015} studied a network of several mRNA templates in interaction with a finite pool of ribosomes, following up the argumentation of \cite{Margaliot2012}, the authors show that the solution of the RFM network with pool monotonically converges to a certain equilibrium point which is determined solely by the initial condition.\\
In contrast to the monotone systems theory approach, we propose a geometric approach to study the stability properties of the model in \cite{Raveh2015} in order to provide a concept to study non-monotonic variants of the RFM in future.\\
In particular, after introducing the detailed system description of the RFM with pool, we show that the equilibria of the RFM with pool constitute a normally hyperbolic invariant submanifold and therefore the restriction of the linearization of this system to the normal spaces of this submanifold can be used to study the stability properties of the submanifold. Finally, we conclude that these equilibria are asymptotically stable relative to certain affine subspaces.

\section{The ribosome flow model with pool}
As described in \cite{Reuveni2011}, the mechanism of translation can be approximated as an initiation event followed by several elongation steps. To be more precise, after binding to the mRNA, the ribosomes perform a unidirectional movement along the mRNA until they reach its end and subsequently unbind. In general, the speed of this motion is not constant, but to focus on the system theoretic analysis and for the sake of simplicity however, a constant elongation speed is assumed in the remainder.\\
The RFM with pool can be modeled as the differential equation
\begin{align}
\dot{R} &= -\lambda R (1-x_1)+\lambda_c  x_n \label{eqn:sys_first}\\
\begin{bmatrix}
\dot{x}_1 \\
\dot{x}_2 \\
\vdots\\
\dot{x}_n 
\end{bmatrix} &= \begin{bmatrix}
\lambda R (1-x_1) - \lambda_c x_1 (1-x_2) \\
\lambda_c x_1 (1-x_2) - \lambda_c x_2 (1-x_3)\\
\vdots\\
\lambda_c x_{n-1} (1-x_n) - \lambda_c x_n
\end{bmatrix}, 
\end{align}
with initial conditions
\begin{align}
R(t=0) &=  R_{\text{tot}} \in [0,\infty)\\
x_i(t=0) &= 0,\quad i=1,\ldots,n.
\end{align} \label{eqn:sys_last}
The model states and parameters are explained in Table (\ref{tab:statesandparam}).
\begin{table}
\renewcommand{\arraystretch}{1.3}
\caption{States and parameters of the RFM with pool.}\label{tab:statesandparam}
\centering
\begin{tabular}{l p{6.8cm}}
$R \in [0,\infty)$:& molecular amount of free ribosomes \\
$x_{i}  \in [0,1]$:& avg. ribosome density at mRNA location $i = 1,\ldots, n$\\
$\lambda \in \mathbb{R}^+$:& initiation rate\\
$\lambda_c  \in \mathbb{R}^+$:& elongation rate\\
$n \in \mathbb{N}$:& number of discretization points on mRNA template
\end{tabular} 
\end{table}
The number of discretization points is determined such that each state can take up exactly one ribosome, therefore $ n = \frac{ml}{rl_{\text{spec}}}$ with $ml$ the length of the mRNA template and $rl_{\text{spec}}$ the specific length a single ribosome occupies on the mRNA template. Typically, the elongation rate $\lambda_c$ is cell type dependent and therefore might be determined from biological data. The remaining free variables are $ml$, $\lambda$ as well as $R_{\text{tot}}$, the total amount of available ribosomes.\\
For simpler notation, we collect all states such that
\begin{align}
\mathbf{x} &= \begin{bmatrix}
R & x_1 & x_2 & \hdots & x_n
\end{bmatrix}^{\top}\\
\dot{\mathbf{x}} &= f(\mathbf{x}) \label{eqn:sys_x}
\end{align}
with $x_0 = R$ and $\mathbf{x} \in \mathbb{R}^{n+1}$. Further, as $x_1, \ldots, x_n$ are densities, it only makes sense to consider solutions 
\begin{align*}
t &\mapsto \mathbf{x}(t) \quad \text{for which}\\
\forall t &\in [0,\infty), \mathbf{x}(t) \in \Omega := [0,\infty) \times [0,1] \times \ldots \times [0,1] %%\label{eqn:sys_statespace}
\end{align*}
and as shown in \cite{Raveh2015}, all solutions starting in $\Omega$ will not only stay in this set but also are separated from the boundary $\partial \Omega$ in finite time.

\section{Model properties}
Similarly to the work of \cite{Raveh2015} and \cite{Margaliot2012}, we are interested in the number of equilibra of System (\ref{eqn:sys_x}) as well as their stability properties. In order to analyze similar models with slightly different conditions such as bidirectional flow on the template (\cite{Edri2014}) or even combinations of such systems, a geometric point of view for the stability analysis will turn out to be beneficial.

\subsection{Equilibrium points}
In order to find all equilibra of the RFM with pool, we bring System (\ref{eqn:sys_x}) into the form
\begin{align}
\dot{\mathbf{x}} &= A(\mathbf{x}) \mathbf{x}
\end{align}
with
\begin{equation}
\begin{split}
&A(\mathbf{x}) = \\ & \resizebox{.44 \textwidth}{!}
{$\begin{bmatrix}
-\lambda (1-x_1) & 0 & \cdots & 0 & \lambda_c \\
\lambda (1-x_1) & -\lambda_c (1-x_2) & 0 & \cdots & 0 \\
0 & \lambda_c (1-x_2) & -\lambda_c (1-x_3) &0 & \vdots \\
\vdots & 0 & \ddots &  \ddots& 0\\
0& \cdots&  0 & \lambda_c (1-x_{n}) & -\lambda_c
\end{bmatrix}
$}
\end{split}
\end{equation}
and rewrite the nullspace of $A$ as
\begin{align}
\ker A(\mathbf{x}) &= \Span (v(\mathbf{x}))\label{eqn:nullspace}
\end{align}
with
\begin{equation}
v(\mathbf{x})=\begin{bmatrix}
\frac{\lambda_c}{\lambda (1-x_1)} & \frac{1}{1-x_2} & \cdots & \frac{1}{1-x_{n}} & 1
\end{bmatrix}^{\top}. 
\end{equation}
This curve now represents the continuum of all equilibria of (\ref{eqn:sys_x}) through $f(\mathbf{x})=0 \Leftrightarrow \mathbf{x} \in \Span (v(\mathbf{x}))$. However, notation (\ref{eqn:nullspace}) is quite unhandy and for that reason we will first give an alternative representation of (\ref{eqn:nullspace}).\\
Instead of using the span of a state dependent vector we can define a parameterization $\gamma: s \mapsto \gamma(s)$ of the nullspace with $s$ a scalar independent variable. Specifically, we define $\gamma$ such that
\begin{align}
\begin{split}
\forall \mathbf{x} \in \text{int }\Omega:& \quad \mathbf{x} \in \ker A(\mathbf{x}) \\
\exists s > 0:& \quad \mathbf{x} =\gamma(s). \label{eqn:gammadef}
\end{split}
\end{align}
Calculating $\gamma$ is straight forward and can be achieved by multiplying $v$ with the independent variable $s$ and then recursively solving $\gamma(s) = s v(\mathbf{x})$ and substituting all $x_i$, which then results in 
\begin{equation}
\gamma: s \mapsto \begin{bmatrix}
\gamma_0(s) & \ldots & \gamma_n(s) 
\end{bmatrix}^{\top}
\end{equation}
with the components $\gamma_i(s)$ given recursively as a series of continued fractions with
\begin{equation}
\gamma_i(s) = \begin{cases}
\tfrac{\lambda_c s}{\lambda(1-\gamma_{1}(s))} & i = 0\\
 \tfrac{s}{1-\gamma_{i+1}(s)} & i = 1 \ldots (n-1)\\
 s & i=n.
 \end{cases} \label{eqn:gamma_recu}
\end{equation}
In the remainder, we restrict our attention to solutions initialized in int $\Omega$, the interior of $\Omega$. This is justified as solutions initialized on $\partial \Omega$ attain values in int $\Omega$ in finite time. Under this assumption we notice that 
\begin{equation}
\exists \bar{s} > 0: s\in (0,\bar{s}) \Rightarrow \gamma(s) \in \text{int } \Omega. \label{eqn:gamma_interval}
\end{equation}
We henceforth restrict the domain of $\gamma$ to $(0,\bar{s})$. This is a rather technical assumption to ensure that we study the RFM with pool in a domain which makes sense biologically. It is further possible to show that for $s^*>\bar{s}$, $\gamma(s^*) \notin \Omega$ and for this case, the model ceases to have any biological meaning. The value of $\bar{s}$ is only dependent on the number of discretization points $n$ and further is a solution of the polynomial equation
\begin{equation}
1+\sum_{j=0}^{\left\lfloor \tfrac{n-1}{2} \right\rfloor} (-\bar{s})^{j+1} \binom{n-j}{j+1}=0. \label{eqn:sbar}
\end{equation} 
The derivation of this result is given in the appendix and we note on the side that this $\bar{s}$ can be used to calculate an upper bound on the protein production rate $\kappa=\lambda_c \bar{s} $ of the studied mRNA template.\\
With this representation of the equilibria of (\ref{eqn:sys_x}) at hand, we can proceed with studying their stability properties.

\subsection{Stability of equilibria}
In this section, we consider the Jacobian of $f$ in order to study the stability properties of the equilibria:
\begin{align}
J_f(\mathbf{x})&=\begin{bmatrix}
\frac{\partial f_0}{\partial x_0}(\mathbf{x}) &  \cdots & \frac{\partial f_0}{\partial x_n}(\mathbf{x}) \\
\vdots & \ddots &  \vdots \\
\frac{\partial f_n}{\partial x_0}(\mathbf{x}) &\cdots & \frac{\partial f_n}{\partial x_n}(\mathbf{x}) 
\end{bmatrix}\\
&= A(\mathbf{x})+ \begin{bmatrix}
0 & \lambda x_0 & 0 & \cdots & 0 \\
0 & -\lambda x_0 & \lambda_c x_1 &\ddots & \vdots \\
0 & \ddots & -\lambda_c x_1 & \ddots & 0 \\
\vdots &  &  &  \ddots& \lambda_c x_{n-1}\\
0& \cdots&  & 0 & -\lambda_c x_{n-1}
\end{bmatrix}
\end{align}
and evaluate this Jacobian at $\gamma(s)$, i.e.
\begin{equation}
\begin{split}
& J_f(\gamma)= \\ & \resizebox{.43 \textwidth}{!}
{$
\begin{bmatrix}
-\lambda(1-\gamma_1) & \lambda \gamma_0                        & 0                                & \cdots   & \lambda_c \\
\lambda(1-\gamma_1)  & -\lambda_c(1-\gamma_2)-\lambda \gamma_0 & \lambda_c \gamma_1               &    &   0\\
0                    & \lambda_c(1-\gamma_2)                   & -\lambda_c (1-\gamma_3+\gamma_1) &   &  \vdots \\
\vdots               & 0                                      &  \lambda_c (1-\gamma_3)          &  \ddots  & \lambda_c \gamma_{n-1}   \\
0                     &                  \cdots                      &            0                      &   \lambda_c(1-\gamma_n) & -\lambda_c (1+\gamma_{n-1})
\end{bmatrix}\label{eqn:jacobiKer}
$}
\end{split}
\end{equation}
where we omitted the argument $s$ for the sake of readability.
\begin{thm} \label{thm: EV}
For all $s>0$ the Jacobian matrix of $f$, evaluated at $\gamma(s)$, has an eigenvalue equal to $0$. Further, all remaining eigenvalues have real parts strictly smaller than zero.
\end{thm}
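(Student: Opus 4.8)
The plan is to exploit the combinatorial structure of \eqref{eqn:jacobiKer}: I would show that $J_f(\gamma(s))$ is an irreducible Metzler matrix whose columns all sum to zero, and then read off the spectrum from the Perron--Frobenius theory of Metzler matrices.

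For the zero eigenvalue, the quickest route is that $f(\gamma(s))=0$ holds identically in $s$, so differentiating gives $J_f(\gamma(s))\,\gamma'(s)=0$; since $\gamma_n'(s)=1\neq 0$, the tangent vector $\gamma'(s)$ of the equilibrium curve is a nonzero kernel element. Equivalently --- and this is what I would actually use --- one checks directly from \eqref{eqn:jacobiKer} that every column sums to zero (a manifestation of the conservation of the total ribosome number $R+\sum_{i=1}^n x_i$), so that $\mathbf 1^\top J_f(\gamma)=0$ with $\mathbf 1=[1\ \cdots\ 1]^\top>0$; hence $0$ is an eigenvalue admitting a strictly positive left eigenvector.

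Next I would establish the two structural facts. For $s$ in the domain already fixed for $\gamma$ we have $\gamma(s)\in\operatorname{int}\Omega$, i.e. $\gamma_0(s)>0$ and $\gamma_i(s)\in(0,1)$ for $i=1,\dots,n$; inspecting \eqref{eqn:jacobiKer}, the sub- and super-diagonal entries are then strictly positive, the upper-right corner entry equals $\lambda_c>0$, and every other off-diagonal entry vanishes, so $J_f(\gamma)$ is Metzler. Its directed graph contains the bidirectional path $0\leftrightarrow 1\leftrightarrow\cdots\leftrightarrow n$ and is therefore strongly connected, so $J_f(\gamma)$ is irreducible. Now Perron--Frobenius theory for irreducible Metzler matrices says that the spectral abscissa is a \emph{simple} eigenvalue, that it is the unique eigenvalue possessing a nonnegative eigenvector, and that every other eigenvalue has strictly smaller real part. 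Since $0$ is an eigenvalue with the positive left eigenvector $\mathbf 1$, it must be the spectral abscissa; therefore $0$ is simple and all remaining eigenvalues have real part strictly below $0$. (As an independent check one can instead localize the spectrum with the column form of Gershgorin's theorem: the zero column sums force every Gershgorin disc to be centered at a negative number with radius equal to its modulus, hence to lie in $\{\operatorname{Re}z\le 0\}$ and to meet the imaginary axis only at the origin.)

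The only real work is the bookkeeping in the third step --- verifying, from the recursion \eqref{eqn:gamma_recu} together with $\gamma(s)\in\operatorname{int}\Omega$, that every off-diagonal entry of \eqref{eqn:jacobiKer} carries the claimed sign and that the columns genuinely sum to zero --- after which the spectral statement is a one-line consequence of standard results. I would also sanity-check the degenerate cases $n=1,2$, where the tridiagonal-plus-corner pattern collapses, and I would read the ``for all $s>0$'' of the statement as ``for all $s$ in the domain of $\gamma$'', since the Metzler property can be lost once $\gamma(s)$ leaves $\Omega$.
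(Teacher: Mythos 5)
Your proof is correct, and it reaches the conclusion by a genuinely different route from the paper's. The paper also starts from the fact that the columns of $J_f(\gamma)$ sum to zero (its identity \eqref{eqn: GerschRad}) and uses the column version of Gerschgorin's theorem to place the spectrum in the closed left half-plane, touching the imaginary axis only at the origin; but it then establishes simplicity of the zero eigenvalue by exhibiting an explicit $LU$ factorization of \eqref{eqn:jacobiKer} and counting the rank of the upper-triangular factor. That computation certifies a one-dimensional kernel, i.e.\ \emph{geometric} simplicity, whereas your irreducible-Metzler/Perron--Frobenius argument delivers \emph{algebraic} simplicity of the spectral abscissa in one stroke --- a strictly stronger conclusion, and in fact the one that is implicitly needed later when $\mathbb{R}^{n+1}$ is split into the tangent line and a normal space spanned by (generalized) eigenvectors of strictly negative eigenvalues. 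Your identification of $0$ as the spectral abscissa via the positive left eigenvector $\mathds{1}$ is sound (it is the infinitesimal form of the conservation of $R+\sum_i x_i$, which the paper only exploits later in Proposition (\ref{prop:1})), the irreducibility check via the strongly connected chain $0\leftrightarrow 1\leftrightarrow\cdots\leftrightarrow n$ is exactly what is needed, and your caveat that ``for all $s>0$'' should be read as ``for all $s$ in the domain $(0,\bar{s})$ of $\gamma$'' applies equally to the paper's own proof, which uses $\gamma_i<1$ in the same way. The trade-off: the paper's $LU$ computation is elementary and self-contained, while yours leans on a standard but nontrivial spectral theorem; what yours buys is the stronger multiplicity statement and a structural explanation (conservation law plus irreducibility) for the shape of the spectrum, which would also survive the non-monotone generalizations the authors have in mind as long as the Metzler and irreducibility properties persist.
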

\begin{proof}
In (\ref{eqn:jacobiKer}) one can see that all diagonal elements of $J_f(\gamma)$ are strictly negative as $\gamma_i<1$ for $i = 1,\ldots,n$ and further, with $J_{f}^{k,i}(\gamma)$ being the element of $J_f(\gamma)$ in the $k$-th row and $i$-th column,
\begin{equation}
\sum_{k\neq i} \vert J_{f}^{k,i}(\gamma) \vert = - J_{f}^{i,i}(\gamma). \label{eqn: GerschRad}
\end{equation}This means that, using Gerschgorin circles (\cite{Gerschgorin1931}) with their center coordinates given by the value of the diagonal elements $J_{f}^{i,i}(\gamma)$ and their radius given by $\sum_{k\neq i} \vert J_{f}^{k,i}(\gamma)\vert$, all eigenvalues have a real part smaller or equal to zero.\\
It remains to show that $J_f(\gamma)$ has precisely one eigenvalue equal to zero which then also implies that the remaining eigenvalues have a real part strictly smaller than zero. Therefore we use the theorem on the reduced row echelon form and bring $J_f(\gamma)$ into upper triangular form, such that
\begin{align}
J_f(\gamma) &= L U, \quad \text{with }\\
L &=\begin{bmatrix}
1 & 0 & & \cdots & 0\\
-1& 1 &  &  & \\
0 & \ddots & \ddots &  & \vdots\\
\vdots&  & -1 & 1 & 0\\
0 & \cdots & 0 & -1 & 1
\end{bmatrix}\\
U &= \notag \\& \hspace{-1.1cm}\resizebox{.44 \textwidth}{!}
{$
\begin{bmatrix}
-\lambda(1-\gamma_1) & \lambda \gamma_0                        & 0                                & \cdots   & 0 & \lambda_c \\
0  & -\lambda_c(1-\gamma_2) & \lambda_c \gamma_1               &    &										 &   \lambda_c\\
\vdots        &       & -\lambda_c (1-\gamma_3) &  \ddots  & 														&  \vdots \\
               &          \ddots                            &       		 & \ddots  &   \lambda_c \gamma_{n-2} &  \lambda_c  \\
               &                                      &        &   &  -\lambda_c (1-\gamma_{n})  & \lambda_c (1+\gamma_{n-1})   \\
0                     &                  \cdots                      &                               &   &  0  &0
\end{bmatrix}
$}.
\end{align}
Now that the rank of $U$ is $n-1$, this concludes the proof.
\end{proof}
One might be tempted to use Lyapunov's indirect method (\cite{Khalil2002}), that asymptotic stability of a hyperbolic equilibrium is determined by the Jacobi linearization, or the theorem of Hartman-Grobman (\cite{Hartman1960}), that a vector field and its linearization are conjugate in a neighborhood of a hyperbolic equilibrium, in order to draw conclusions about the stability properties of the equilibria. However, the existence of the zero eigenvalue means that the equilibria are non-hyperbolic and these methods are not applicable. Yet, \cite{Pugh1970} offers an extension to normally hyperbolic invariant manifolds in the following sense: a vector field and the restriction of its linearization to the normal spaces of a given normally hyperbolic invariant manifold are conjugate in a neighborhood of the normally hyperbolic invariant manifold. Therefore, following the notation of \cite{Hirsch1977}, we will show in the next section that $f$ is normally hyperbolic at $\gamma((0,\bar{s}))$ in order to continue with the stability analysis.

\subsection{Normal hyperbolicity}
In this section, we proceed as follows: we already noted that $\gamma((0,\bar{s}))$ is a manifold of equilibria (and thus invariant) and that $J_f(\gamma)$, the Jacobian of $f$ evaluated on $\gamma$, has precisely one zero eigenvalue. Next, we show that the eigenvector associated with the zero eigenvalue lies in the tangent space $T_{\gamma(s)}\gamma((0,\bar{s}))=\Span (\lbrace \dot{\gamma}(s)\rbrace)$ of $\gamma((0,\bar{s}))$ at $\gamma(s)$, no matter at which $s \in (0,\bar{s})$ we evaluate $\gamma$. Further, as shown in Theorem (\ref{thm: EV}), all other eigenvalues of the Jacobian are negative if evaluated on $\gamma((0,\bar{s}))$ and we will show that their eigenvectors span $N_{\gamma(s)}\gamma((0,\bar{s}))$, the normal space of $\gamma((0,\bar{s}))$ at $\gamma(s)$ in $\mathbb{R}^{n+1}$. In conclusion, $\gamma((0,\bar{s}))$ is not only normally hyperbolic, but further asymptotically stable in a sense we detail further below.

\begin{lem}\label{lem: tangential space}
The eigenvector associated with the zero eigenvalue of $J_f(\gamma(s))$ is linearly dependent on $\dot{\gamma}(s)=\frac{d}{ds}\gamma(s)$. 
\end{lem}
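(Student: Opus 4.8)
The plan is to use that $\gamma((0,\bar s))$ is, by construction, a curve consisting entirely of equilibria of System~(\ref{eqn:sys_x}); differentiating the equilibrium relation along this curve will directly exhibit $\dot\gamma(s)$ as an element of $\ker J_f(\gamma(s))$, and Theorem~\ref{thm: EV} will then force it to span the whole zero-eigenspace.

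First I would observe that, by the defining property~(\ref{eqn:gammadef}) of the parameterization, $f(\gamma(s)) = 0$ for every $s \in (0,\bar s)$. The components $\gamma_i$ in~(\ref{eqn:gamma_recu}) are rational functions of $s$ whose denominators stay away from zero on $(0,\bar s)$ --- this is precisely what~(\ref{eqn:gamma_interval}) guarantees, since there $\gamma_i(s) < 1$ for $i=1,\ldots,n$ --- so $\gamma$ is of class $C^1$ on $(0,\bar s)$ and $f$ is smooth. One may therefore differentiate the identity $f(\gamma(s)) = 0$ with respect to $s$, and the chain rule gives
\begin{equation}
J_f(\gamma(s))\,\dot\gamma(s) = 0, \qquad s \in (0,\bar s),
\end{equation}
so that $\dot\gamma(s) \in \ker J_f(\gamma(s))$. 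Next I would note that $\dot\gamma(s) \neq 0$: from~(\ref{eqn:gamma_recu}) the last component is $\gamma_n(s) = s$, hence $\dot\gamma_n(s) = 1$, which rules out $\dot\gamma(s) = 0$. Thus $\dot\gamma(s)$ is a genuine eigenvector of $J_f(\gamma(s))$ for the eigenvalue $0$.

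To conclude, Theorem~\ref{thm: EV} asserts that all eigenvalues of $J_f(\gamma(s))$ other than the single zero eigenvalue have strictly negative real part, so in particular $0$ has geometric multiplicity one and $\ker J_f(\gamma(s))$ is one-dimensional, namely $\Span(\{\dot\gamma(s)\})$. Hence every eigenvector associated with the zero eigenvalue is a scalar multiple of $\dot\gamma(s)$, which is the assertion. I expect the only mildly delicate point to be the justification that $\gamma$ is differentiable on all of $(0,\bar s)$ --- i.e.\ that no denominator in the continued-fraction recursion~(\ref{eqn:gamma_recu}) degenerates --- but this is already built into the restriction of the domain to $(0,\bar s)$ via~(\ref{eqn:gamma_interval}); once that is in hand the lemma is essentially a one-line consequence of differentiating the equilibrium condition.
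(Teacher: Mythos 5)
Your proof is correct, and it takes a genuinely different route from the paper's. The paper establishes $J_f(\gamma)\dot\gamma = 0$ by brute force: it writes out the rows $J_f^0, J_f^1,\ldots,J_f^n$ of the Jacobian evaluated on $\gamma$, and verifies row by row that each one annihilates $\dot\gamma$, using the recursion~(\ref{eqn:gamma_recu}) to derive the auxiliary identity $\dot{\gamma}_i - \tfrac{d}{ds}(\gamma_i\gamma_{i+1}) = 1$ (resp.\ $\lambda_c/\lambda$ for $i=0$) that makes consecutive terms telescope. You instead observe that $\gamma$ is by construction a differentiable curve of equilibria, so that $f\circ\gamma\equiv 0$ on $(0,\bar s)$, and one application of the chain rule yields $J_f(\gamma(s))\dot\gamma(s)=0$; the nonvanishing of $\dot\gamma$ (via $\dot\gamma_n=1$) and the simplicity of the zero eigenvalue from Theorem~\ref{thm: EV} then finish the argument exactly as in the paper's framework. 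Your version is shorter, conceptually cleaner, and generalizes immediately to any smooth parameterized family of equilibria (e.g.\ the bidirectional or time-varying variants the paper mentions as future work), at the cost of relying on the regularity of $\gamma$ on $(0,\bar s)$ --- which you correctly flag and justify via~(\ref{eqn:gamma_interval}). What the paper's computational approach buys in exchange is an explicit verification that doubles as a consistency check on the formula~(\ref{eqn:jacobiKer}) for $J_f(\gamma)$ and produces the identity~(\ref{eqn:gammadot}), but it proves nothing that your argument does not.
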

\begin{proof}
It is sufficient to show that
\begin{equation}
J_f(\gamma) \dot{\gamma} = 0.
\end{equation}
We thus consecutively show that
\begin{align}
J_{f}^0(\gamma) \dot{\gamma} &= 0 \label{eqn: thm2_pt1}\\
J_{f}^i(\gamma) \dot{\gamma} &= 0 \quad i = 1,\ldots,n-1 \label{eqn: thm2_pt2}\\
J_{f}^n(\gamma) \dot{\gamma} &= 0 \label{eqn: thm2_pt3}
\end{align}
with $J_{f}^i$ the $i+1$-th row of the Jacobian $J_{f}$. We start with showing (\ref{eqn: thm2_pt3}):
\begin{align}
J_{f}^n(\gamma) \dot{\gamma} &= \lambda_c(1-\gamma_n)\dot{\gamma}_{n-1} -\lambda_c(1+\gamma_{n-1})\dot{\gamma}_{n}\\
&=\lambda_c(1-s)\frac{1}{(s-1)^2} -\lambda_c(1+\frac{s}{1-s})\\
&= \lambda_c \left( - \frac{1}{s-1} + \frac{1}{s-1} \right) =0.
\end{align}
Showing (\ref{eqn: thm2_pt2}) can now be achieved in a general form since for all $i = 1,\ldots,n-1 $ the structure of $J_{f}^i$ is identical, namely
\begin{equation}
\begin{split}
J_{f}^i(\gamma) \dot{\gamma} =& \lambda_c (1-\gamma_i) \dot{\gamma}_{i-1}\\
& - \lambda_c (1-\gamma_{i+1}+\gamma_{i-1}) \dot{\gamma}_i\\
& + \lambda_c \gamma_i \dot{\gamma}_{i+1}.
\end{split}
\end{equation} 
We rearrange the last equation to get
\begin{align}
\begin{split}
J_{f}^i(\gamma) \dot{\gamma} =& \lambda_c \left( \dot{\gamma}_{i-1} -  \gamma_{i}\dot{\gamma}_{i-1}-\gamma_{i-1}\dot{\gamma}_{i}  \right)\\
&- \lambda_c \left( \dot{\gamma}_{i} -  \gamma_{i+1}\dot{\gamma}_{i}-\gamma_{i}\dot{\gamma}_{i+1}  \right)
\end{split}\\
\begin{split}
=& \lambda_c \left( \dot{\gamma}_{i-1} - \dot{\overline{(\gamma_{i-1}\gamma_{i})}} \right)\\
&- \lambda_c \left( \dot{\gamma}_{i} - \dot{\overline{(\gamma_{i}\gamma_{i+1})}}\right).
\end{split}
\end{align}
Now, we can utilize the generating equation (\ref{eqn:gamma_recu}) again to realize that
\begin{align}
\dot{\gamma}_i - \dot{\overline{(\gamma_i \gamma_{i+1})}} &=\begin{cases} \frac{\lambda_c}{\lambda} & i =0\\1 & i =1,\ldots,n-1 \end{cases} \label{eqn:gammadot}
\end{align}
and using (\ref{eqn:gammadot}) for $i =1,\ldots,n-1$ to arrive at the equality
\begin{equation}
J_{f}^i(\gamma) \dot{\gamma} =0 \quad i =1,\ldots,n-1.
\end{equation}
Finally, we merely need to verify whether this is also true for (\ref{eqn: thm2_pt1}):
\begin{align}
J_{f}^0(\gamma) \dot{\gamma} &= -\lambda(1-\gamma_1)\dot{\gamma}_{0} + \lambda \gamma_{0} \dot{\gamma}_1 + \lambda_c \dot{\gamma}_n\\
&= \lambda (-\dot{\gamma}_{0} +\dot{\gamma}_{0} \gamma_1 +\dot{\gamma}_{1} \gamma_0 ) + \lambda_c \dot{\gamma}_{n}.
\end{align}
Now with $\dot{\gamma}_n=1$, and using equation (\ref{eqn:gammadot}) for $i=0$,
\begin{align}
J_{f}^0(\gamma) \dot{\gamma} &= \lambda \left(-\frac{\lambda_c}{\lambda}\right)+ \lambda_c =0.
\end{align}
This concludes the proof. 
\end{proof}

One says that $f$ is normally hyperbolic at $\gamma((0,\bar{s}))$ if the derivative of $f$ evaluated at $\gamma(s)$ leaves the continuous splitting
\begin{equation}
\mathbb{R}^{n+1} = N^u \oplus T_{\gamma(s)}\gamma((0,\bar{s})) \oplus N^s
\end{equation}
invariant and if the normal behavior dominates the tangent one. $N^u$ and $N^s$ in that respect are the subspaces spanned by the normal eigenvectors of the Jacobian associated with positive and negative eigenvalues respectively. Due to Theorem (\ref{thm: EV}) we know that $N^u=\emptyset$. Lemma (\ref{lem: tangential space}) shows that the dynamics of $f$ on $T_{\gamma(s)}\gamma((0,\bar{s}))$ is determined by the zero eigenvalue and it remains to show that the (generalized) eigenvectors of the remaining eigenvalues span the normal space of $\gamma((0,\bar{s}))$ at any $\gamma(s)$. In order to do so, we define the affine subspaces 
\begin{align}
S_p:= \left\lbrace \mathbf{x} \middle| \sum_{i=0}^{n} x_i = p \right\rbrace
\end{align} 
and note the following proposition:
\begin{prop}\label{prop:1}
Any solution of System (\ref{eqn:sys_x}) initialized on a certain $S_p$ will stay on $S_p$ for all times and $p = R_{\text{tot}}$.
\end{prop}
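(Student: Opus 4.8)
The plan is to exhibit the map $\mathbf{x} \mapsto \sum_{i=0}^{n} x_i$ as a first integral of the dynamics, so that every level set $S_p$ is invariant, and then to evaluate this integral on the prescribed initial condition to pin down $p$.

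First I would introduce the scalar function $V(\mathbf{x}) = \sum_{i=0}^{n} x_i = R + \sum_{i=1}^{n} x_i$ and differentiate it along solutions of (\ref{eqn:sys_x}), i.e.\ compute $\dot{V} = \sum_{i=0}^{n} f_i(\mathbf{x}) = \dot{R} + \sum_{i=1}^{n}\dot{x}_i$. Substituting the right-hand sides from (\ref{eqn:sys_first})--(\ref{eqn:sys_last}), the term $-\lambda R(1-x_1)$ appearing in $\dot{R}$ cancels the term $+\lambda R(1-x_1)$ in $\dot{x}_1$; each intermediate term $-\lambda_c x_i(1-x_{i+1})$ in $\dot{x}_i$ cancels the term $+\lambda_c x_i(1-x_{i+1})$ in $\dot{x}_{i+1}$; and the remaining $+\lambda_c x_n$ in $\dot{R}$ cancels the $-\lambda_c x_n$ in $\dot{x}_n$. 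The sum telescopes to $\dot{V} \equiv 0$, so $V$ is constant along every solution of the system.

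Next, since $V$ is constant along solutions, any solution with $\mathbf{x}(0) \in S_p$ satisfies $V(\mathbf{x}(t)) = V(\mathbf{x}(0)) = p$ for all $t$ in its interval of existence, hence $\mathbf{x}(t) \in S_p$ for all times; combined with the forward invariance of $\Omega$ noted earlier, this yields a global solution remaining on $S_p$. Finally, evaluating $V$ on the prescribed initial condition $R(0) = R_{\text{tot}}$, $x_i(0) = 0$ for $i = 1,\ldots,n$, gives $p = V(\mathbf{x}(0)) = R_{\text{tot}}$, which is the second assertion.

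The computation is essentially routine; the only point requiring care is the bookkeeping of the cancellations in the telescoping sum, together with the observation that this conservation law is exactly what makes the affine slices $S_p$ the natural complements transverse to the equilibrium curve $\gamma$. I therefore do not expect a genuine obstacle here — the content of the proposition is simply the remark that the RFM with pool conserves the total number of ribosomes.
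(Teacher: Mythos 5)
Your proposal is correct and follows essentially the same route as the paper: both exhibit $V(\mathbf{x})=\sum_{i=0}^{n}x_i$ as a first integral (the paper writes $\mathcal{L}_fV=\begin{bmatrix}1&\cdots&1\end{bmatrix}f=0$ where you spell out the telescoping cancellations) and conclude that the level sets $S_p$ are invariant, with $p=R_{\text{tot}}$ fixed by the initial condition.
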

\begin{proof}
Assume the initial state of System (\ref{eqn:sys_x}) lies in $S_p$, i.e.  $p = R_{\text{tot}}$, then choose 
\begin{align}
V(\mathbf{x})=\sum_{i=0}^{n} x_i
\end{align}
as a Lyapunov function such that $S_p$ is just the level set $V^{-1}(\lbrace p \rbrace)$ of $V$. Now consider the Lie derivative $\mathcal{L}_fV$ of $V$ along $f$,
\begin{align}
\mathcal{L}_fV &= \frac{\partial V}{\partial \mathbf{x}} \cdot f\\
 &= \begin{bmatrix}
 1 & \cdots & 1
 \end{bmatrix} f\\
 &= 0.
\end{align}
This shows that any level set of $V$ is invariant under (\ref{eqn:sys_x}). 
\end{proof}
With Proposition (\ref{prop:1}) at hand, it remains to show that the intersection of $\gamma$ with all $S_p$ is always transversal and never tangential to conclude normal hyperbolicity of $f$ at $\gamma((0,\bar{s}))$.

\begin{lem}\label{lem: transv}
For all $p>0$, the curve $\gamma$ intersects $S_p$ uniquely and transversely.
\end{lem}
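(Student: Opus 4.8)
The plan is to reduce both assertions to the monotonicity of the single scalar function
\begin{equation}
g(s) := \sum_{i=0}^{n}\gamma_i(s) = V(\gamma(s)),
\end{equation}
since $\gamma(s)\in S_p$ exactly when $g(s)=p$, and since the tangent space of the hyperplane $S_p$ at any of its points is $\ker\begin{bmatrix}1 & \cdots & 1\end{bmatrix}$. First I would show that $g$ is strictly increasing on $(0,\bar{s})$. Differentiating the generating recursion (\ref{eqn:gamma_recu}) in $s$ gives $\dot{\gamma}_n=1$, the relations
\begin{equation}
\dot{\gamma}_i = \frac{1}{1-\gamma_{i+1}} + \frac{s\,\dot{\gamma}_{i+1}}{(1-\gamma_{i+1})^2},\qquad i=1,\ldots,n-1,
\end{equation}
and $\dot{\gamma}_0 = \tfrac{\lambda_c}{\lambda}\bigl(\tfrac{1}{1-\gamma_1}+\tfrac{s\,\dot{\gamma}_1}{(1-\gamma_1)^2}\bigr)$. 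Because $\gamma(s)\in\text{int }\Omega$ for $s\in(0,\bar{s})$ by (\ref{eqn:gamma_interval}), one has $\gamma_i(s)\in(0,1)$ for $i\geq 1$, so a backward induction starting from $i=n$ yields $\dot{\gamma}_i(s)>0$ for every $i$. Hence $g'(s)=\sum_{i=0}^n\dot{\gamma}_i(s)>0$ on $(0,\bar{s})$; in particular $g$ is injective, so each $S_p$ meets $\gamma$ in at most one point.

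Next I would check that $g$ attains every value $p>0$. As $s\to 0^+$ the recursion forces $\gamma_i(s)\to 0$ for all $i$, so $g(0^+)=0$. As $s\to\bar{s}^-$ the curve leaves $\text{int }\Omega$; since $\gamma_0(s)>0$ and $\gamma_i(s)>s>0$ for $i\geq 1$ (the latter is immediate from $\gamma_i=s/(1-\gamma_{i+1})$ with $\gamma_{i+1}\in(0,1)$), this is possible only if some $\gamma_i$, $i\geq 1$, tends to $1$, and the recursion then forces this to be $\gamma_1$ (were $\gamma_i\to 1$ for some $i\geq 2$, then $\gamma_{i-1}=s/(1-\gamma_i)\to\infty$, contradicting $\gamma_{i-1}<1$). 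Consequently $\gamma_0(s)=\tfrac{\lambda_c s}{\lambda(1-\gamma_1(s))}\to+\infty$, so $g(s)\to+\infty$. By continuity and the intermediate value theorem $g$ maps $(0,\bar{s})$ bijectively onto $(0,\infty)$; together with the previous paragraph this gives, for every $p>0$, a unique $s^*\in(0,\bar{s})$ with $\gamma(s^*)\in S_p$.

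It remains to record transversality at this $\gamma(s^*)$. Since $\dot{\gamma}_n\equiv 1$ we have $\dot{\gamma}(s^*)\neq 0$, so $T_{\gamma(s^*)}\gamma((0,\bar{s}))=\Span(\lbrace\dot{\gamma}(s^*)\rbrace)$, and $\begin{bmatrix}1 & \cdots & 1\end{bmatrix}\dot{\gamma}(s^*)=g'(s^*)>0$ shows $\dot{\gamma}(s^*)\notin T_{\gamma(s^*)}S_p$. As $S_p$ has codimension one in $\mathbb{R}^{n+1}$, this yields $T_{\gamma(s^*)}\gamma((0,\bar{s}))\oplus T_{\gamma(s^*)}S_p=\mathbb{R}^{n+1}$, i.e.\ $\gamma$ crosses $S_p$ transversely, which completes the proof. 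The only genuinely delicate point is the boundary behavior $g(s)\to\infty$ as $s\to\bar{s}^-$; this is where the characterization of $\bar{s}$ in (\ref{eqn:gamma_interval})--(\ref{eqn:sbar}) as the parameter at which the largest density $\gamma_1$ first reaches $1$ has to be used. The remainder, in particular the decisive sign statement $g'(s)>0$, is the elementary induction above.
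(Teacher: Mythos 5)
Your proof is correct and follows essentially the same route as the paper: the decisive step in both is the backward induction showing $\dot{\gamma}_i(s)>0$ for all $i$, combined with the observation that the normal direction to every $S_p$ is spanned by the all-ones vector $\mathds{1}$, so that $\mathds{1}^{\top}\dot{\gamma}(s)=g'(s)>0$ gives transversality while strict monotonicity of $g$ gives uniqueness. You go slightly further than the paper by also establishing \emph{existence} of the intersection for every $p>0$ via the boundary behavior $g(0^+)=0$ and $g(s)\to\infty$ as $s\to\bar{s}^-$ (using that $\gamma_1\to 1$ forces $\gamma_0\to\infty$) together with the intermediate value theorem, a point the paper's proof leaves implicit.
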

In other words, this means that the continuum of equilibria of system (\ref{eqn:sys_x}) represented by the curve $\gamma$ never intersects the $n$-dimensional affine subspace $S_p$ tangentially. \newline
\begin{proof}
We start with showing the transversality of the intersection of $\gamma$ and $S_p$ as the uniqueness of this intersection then follows as we will show at the end of the proof. In order to do so, we note that the affine subspace $S_p$ is always the same subspace $M$ translated in direction of $x_0$
\begin{equation}
S_p= \{e_1 p\} + M 
\end{equation}
with $e_1$ the canonical unit vector and the subspace $M$ defined as the image of
\begin{align}
\Lambda &=\begin{bmatrix}
\mu_1&\mu_2&\ldots&\mu_N
\end{bmatrix}=\begin{bmatrix}
-1 & -1 & \cdots & -1 \\
1 & 0 & \cdots & 0 \\
0 & 1 &  &  \vdots \\
\vdots & 0 &  & 0\\
0 & 0 & \cdots & 1 
\end{bmatrix}.
\end{align}
This means that the only vectors perpendicular to $M$ are multiples of the all ones vector $\mathds{1}$ and it is thus sufficient to show that the velocity vector of $\gamma$ never is perpendicular to $\mathds{1}$, in other words
\begin{equation}
\langle \dot{\gamma}, \mathds{1}\rangle \neq 0.
\end{equation}
Showing this will be achieved by noting that $\dot{\gamma}_i$ is positive for all $i=0,\ldots,n$. This can be achieved by the following inductive argument starting with the highest appearing index $n$ and then reducing it step wise.\\
First, we consider $\gamma_n(s)=s$ and see that
\begin{equation}
\dot{\gamma}_n(s)=1 >0.\label{eqn:ind_start}
\end{equation}
Next, we assume that the statement that 
\begin{equation}
\dot{\gamma}_i > 0 \label{eqn:ind_assum}
\end{equation} is true. Now further reducing the index, we need to show that (\ref{eqn:ind_assum}) still holds for $\dot{\gamma}_{i-1}$. Therefore we differentiate the generating equation (\ref{eqn:gamma_recu}) for index $i-1$ to arrive at
\begin{equation}
\dot{\gamma}_{i-1}(s)=\frac{(1-\gamma_{i})+s\dot{\gamma}_i}{(1-\gamma_{i})^2}.
\end{equation}
As shown in (\ref{eqn:gamma_interval}) we already know that $\gamma_i \in (0,1) \quad \forall i>0$. Together with (\ref{eqn:ind_assum}), this means that 
\begin{equation}
\dot{\gamma}_{i-1}(s) > 0
\end{equation} 
which concludes the proof for the transversality of the intersection. Now that we further know that all derivatives of $\gamma$ with respect to $s$ are larger than zero for $s \in (0,\bar{s})$, the uniqueness of the intersection follows directly from the combination of these arguments. 
\end{proof}
\begin{figure}
\centering
% This file was created by matlab2tikz.
%
%The latest updates can be retrieved from
%  http://www.mathworks.com/matlabcentral/fileexchange/22022-matlab2tikz-matlab2tikz
%where you can also make suggestions and rate matlab2tikz.
%
\definecolor{mycolor1}{rgb}{0.00000,0.44700,0.74100}%
\definecolor{mycolor2}{rgb}{0.85000,0.32500,0.09800}%
\begin{tikzpicture}

\begin{axis}[%
width=1in,
height=3in,
scale only axis,
plot box ratio=1 1 4,
xmin=0,
xmax=1,
restrict x to domain=0:1.6,
restrict y to domain=0:1.6,
restrict z to domain=0:4.1,
tick align=outside,
xlabel={$x_1$},
ymin=0,
ymax=1,
ylabel={$x_2$},
zmin=0,
zmax=4,
zlabel={$R$},
view={60}{46},
axis background/.style={fill=white},
axis x line*=bottom,
axis y line*=left,
axis z line*=left,
legend pos=outer north east
]
\addplot3 [color=mycolor1,solid,line width=2.0pt,fill opacity=0.8,forget plot]
 table[row sep=crcr] {%
0	0	0\\
0.00507512175114543	0.00504949494949495	0.00253762623685843\\
0.0102020199958776	0.0100989898989899	0.00510154097250625\\
0.0153814911338734	0.0151484848484848	0.00769256555309408\\
0.0206143479820702	0.0201979797979798	0.0103115559005606\\
0.0259014201998953	0.0252474747474748	0.0129594043513829\\
0.031243554727783	0.0302969696969697	0.0156370416139303\\
0.0366416162394621	0.0353464646464646	0.0183454388534448\\
0.0420964876085254	0.0403959595959596	0.0210856099144622\\
0.0476090703898058	0.0454454545454545	0.0238586136913625\\
0.0531802853161137	0.0504949494949495	0.0266655566586973\\
0.0588110728109081	0.0555444444444444	0.0295075955739995\\
0.064502393517503	0.0605939393939394	0.0323859403669533\\
0.0702552288454324	0.0656434343434344	0.0353018572300856\\
0.0760705815346288	0.0706929292929293	0.0382566719275748\\
0.0819494762380945	0.0757424242424242	0.0412517733403461\\
0.087892960123778	0.0807919191919192	0.0442886172673786\\
0.0939021034963984	0.0858414141414141	0.0473687305050889\\
0.0999780004399912	0.0908909090909091	0.0504937152288181\\
0.106121769481989	0.095940404040404	0.0536652537028481\\
0.112334554279679	0.100989898989899	0.0568851133480519\\
0.118617524329931	0.106039393939394	0.0601551521992639\\
0.124971875703107	0.111088888888889	0.0634773247877899\\
0.131398831802135	0.116138383838384	0.0668536884882061\\
0.137899644147748	0.121187878787879	0.0702864103727665\\
0.144475593190948	0.126237373737374	0.0737777746214254\\
0.151127989153816	0.131286868686869	0.0773301905407375\\
0.157858172899803	0.136336363636364	0.0809462012508153\\
0.16466751683474	0.141385858585859	0.0846284931061918\\
0.171557425839823	0.146435353535354	0.0883799059239565\\
0.17852933823792	0.151484848484848	0.09220344410104\\
0.185584726794574	0.156534343434343	0.0961022887121492\\
0.19272509975519	0.161583838383838	0.10007981069078\\
0.199952001919923	0.166633333333333	0.104139585208146\\
0.207267015757883	0.171682828282828	0.108285407378983\\
0.214671762562344	0.176732323232323	0.112521309439324\\
0.222167903648727	0.181781818181818	0.116851579559739\\
0.229757141597219	0.186831313131313	0.121280782478618\\
0.237441221541977	0.191880808080808	0.125813782164315\\
0.245221932508971	0.196930303030303	0.130455766742748\\
0.253101108804618	0.201979797979798	0.135212275959158\\
0.261080631457465	0.207029292929293	0.140089231479778\\
0.269162429715323	0.212078787878788	0.145092970382036\\
0.27734848260032	0.217128282828283	0.150230282231729\\
0.285640820524541	0.222177777777778	0.155508450203523\\
0.294041526969002	0.227227272727273	0.16093529676872\\
0.302552740228881	0.232276767676768	0.166519234553301\\
0.311176655228085	0.237326262626263	0.172269323061958\\
0.319915525406374	0.242375757575758	0.178195332072965\\
0.328771664682457	0.247425252525253	0.18430781263741\\
0.337747449496656	0.252474747474747	0.190618176768707\\
0.346845320936915	0.257524242424242	0.197138787089183\\
0.356067786952162	0.262573737373737	0.203883057916091\\
0.365417424657232	0.267623232323232	0.210865569527083\\
0.374896882733795	0.272672727272727	0.218102197654381\\
0.384508883932005	0.277722222222222	0.225610260629287\\
0.394256227677817	0.282771717171717	0.233408687049049\\
0.404141792791221	0.287821212121212	0.241518207384835\\
0.414168540320945	0.292870707070707	0.249961573616373\\
0.424339516501477	0.297920202020202	0.25876381179547\\
0.434657855838623	0.302969696969697	0.267952513445746\\
0.445126784330155	0.308019191919192	0.277558172948887\\
0.455749622828507	0.313068686868687	0.287614579612904\\
0.466529790552882	0.318118181818182	0.298159275049187\\
0.477470808758566	0.323167676767677	0.309234088912707\\
0.488576304571734	0.328217171717172	0.320885769129569\\
0.4998500149985	0.333266666666667	0.333166726642676\\
0.51129579111755	0.338316161616162	0.346135919710831\\
0.522917602466204	0.343365656565657	0.359859909253236\\
0.53471954163043	0.348415151515152	0.374414125123655\\
0.546705829049948	0.353464646464646	0.389884394193538\\
0.558880818050265	0.358514141414141	0.406368795650099\\
0.571249000114273	0.363563636363636	0.423979928280675\\
0.583815010406784	0.368613131313131	0.442847700578315\\
0.596583633566316	0.373662626262626	0.463122789942698\\
0.609559809779295	0.378712121212121	0.48498096596824\\
0.622748641152906	0.383761616161616	0.508628540576259\\
0.63615539840384	0.388811111111111	0.534309303209975\\
0.649785527881375	0.393860606060606	0.562313435646939\\
0.663644658944443	0.398910101010101	0.592989098609573\\
0.677738611713666	0.403959595959596	0.626757673495578\\
0.692073405220816	0.409009090909091	0.664134079101536\\
0.706655265979637	0.414058585858586	0.705754250611234\\
0.721490637003702	0.419108080808081	0.752412910465872\\
0.736586187298714	0.424157575757576	0.805116427661625\\
0.7431 				0.4263  			0.8296\\
};

\addplot3[area legend,solid,draw=black,fill=red,fill opacity=0.5]
table[row sep=crcr] {%
x	y	z\\
0	0	2\\
1	0	1\\
1	1	0\\
0	1	1\\
}--cycle;
\addlegendentry{$S_2$};

\addplot3 [color=mycolor1,solid,line width=2.0pt,fill opacity=0.8,forget plot]
 table[row sep=crcr] {%
0.7431 				0.4263  			0.8296\\
0.751948821858658	0.429207070707071	0.865158299031551\\
0.76758569219727	0.434256565656566	0.934229415052097\\
0.783504207448562	0.439306060606061	1.0145833677153\\
0.799712046072628	0.444355555555556	1.10929176428825\\
0.816217168205559	0.449405050505051	1.22265242655447\\
0.833027828695217	0.454454545454545	1.36086912538559\\
0.850152590867722	0.45950404040404	1.53323985734852\\
0.8614				0.4628				1.669\\
};

\addplot3[area legend,solid,draw=black,fill=green,fill opacity=0.5]
table[row sep=crcr] {%
x	y	z\\
0	0	3\\
1	0	2\\
1	1	1\\
0	1	2\\
}--cycle;
\addlegendentry{$S_3$};

\addplot3 [color=mycolor1,solid,line width=2.0pt,fill opacity=0.8]
 table[row sep=crcr] {%
0.8614				0.4628				1.669\\
0.86760034107287	0.464553535353535	1.75436077070718\\
0.885380304060424	0.46960303030303	2.04852676694671\\
0.903502059243137	0.474652525252525	2.45939199080146\\
0.921975557906794	0.47970202020202	3.07404966528938\\
0.940811141432201	0.484751515151515	4.09495576432047\\
0.960019560599251	0.48980101010101	6.1255080914873\\
0.979611996048774	0.494850505050505	12.1358252194364\\
0.999600079984003	0.4999	624.999975\\
};
\addlegendentry{$\gamma(s)$}; 

\addplot3 [color=black,only marks,mark=o, forget plot]
table[row sep=crcr] {%
0.8614	0.4628	1.669\\
0.7431  0.4263  0.8296\\
};

\end{axis}
\end{tikzpicture}% \caption{Two affine subspaces $S_2$ and $S_3$ and the nullspace of $A(\mathbf{x})$ depicted for the first three dimensions.} \label{fig:1}
\end{figure}

Fig. (\ref{fig:1}) illustrates two affine subspaces $S_2$ (red) and $S_3$ (green) and the continuum of equilibria $\gamma((0,\frac{1}{2}))$ (blue) in the first three coordinates. One can see that the subspaces are just shifted by the difference in total amount of ribosomes $R_{\text{tot}}$ in $R$ direction and that the curve intersects with each subspace uniquely and not tangentially. We are now able to formulate our main result.

\begin{thm}
The invariant set $\gamma((0,\bar{s}))$ of (\ref{eqn:sys_x}) is asymptotically stable.
\end{thm}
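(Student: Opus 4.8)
The plan is to assemble the facts already established --- Theorem~\ref{thm: EV}, Lemma~\ref{lem: tangential space}, Proposition~\ref{prop:1} and Lemma~\ref{lem: transv} --- into a genuine statement of normal hyperbolicity, and then to read off asymptotic stability from the fact that, along the equilibrium curve, the vector field vanishes identically while the normal part of its linearization is a contraction. First I would fix $p>0$ and exploit Proposition~\ref{prop:1}: the leaf $S_p$ is flow-invariant and affine, so its common tangent subspace $M=\mathds{1}^\perp$ is left invariant by $J_f(\gamma(s_p))$, where $\gamma(s_p)$ denotes the unique point of $\gamma((0,\bar{s}))$ lying in $S_p$ (Lemma~\ref{lem: transv}). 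By the transversality $\langle\dot\gamma(s_p),\mathds{1}\rangle\neq 0$ of Lemma~\ref{lem: transv}, $\Span\{\dot\gamma(s_p)\}$ is complementary to $M$, giving the $J_f(\gamma(s_p))$-invariant decomposition $\mathbb{R}^{n+1}=\Span\{\dot\gamma(s_p)\}\oplus M$. Lemma~\ref{lem: tangential space} identifies the first summand as the eigenspace of the eigenvalue $0$; since that eigenvalue is simple by Theorem~\ref{thm: EV}, the restriction $J_f(\gamma(s_p))|_M$ carries exactly the remaining $n$ eigenvalues, all of which have strictly negative real part, hence is Hurwitz. Choosing $N^u=\{0\}$, $N^s:=M$ and the tangent bundle $T_{\gamma(s)}\gamma((0,\bar{s}))=\Span\{\dot\gamma(s)\}$, the splitting $\mathbb{R}^{n+1}=N^u\oplus T_{\gamma(s)}\gamma((0,\bar{s}))\oplus N^s$ is $J_f(\gamma(s))$-invariant for every $s$, depends smoothly on $s$ (the stable part being the fixed subspace $M$, the tangent part being smooth in $s$ since $\dot\gamma(s)\neq 0$), and its normal contraction dominates the vanishing tangential exponent; hence $f$ is normally hyperbolic at $\gamma((0,\bar{s}))$ with empty unstable bundle.

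The second step is to obtain asymptotic stability leaf by leaf. For fixed $p$, $f|_{S_p}$ is a $C^1$ vector field on the $n$-dimensional invariant manifold $S_p\cap\mathrm{int}\,\Omega$ whose only equilibrium there is $\gamma(s_p)$ (uniqueness from Lemma~\ref{lem: transv}) and whose linearization at that point is $J_f(\gamma(s_p))|_M$. Crucially, restricting to $S_p$ removes the zero eigenvalue --- whose eigenvector $\dot\gamma(s_p)$ points out of $S_p$ --- so the restricted equilibrium is hyperbolic and Lyapunov's indirect method, which was unavailable for the full system, now applies: since $J_f(\gamma(s_p))|_M$ is Hurwitz, $\gamma(s_p)$ is (locally) asymptotically stable relative to $S_p$. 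This is precisely the asymptotic stability ``relative to certain affine subspaces'' announced in the abstract.

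Finally I would glue these leaf-wise statements into set stability of $\gamma((0,\bar{s}))$. On a relatively compact subarc $\gamma([s_1,s_2])\subset\gamma((0,\bar{s}))$ the eigenvalues of $J_f(\gamma(s))|_M$ stay uniformly bounded away from the imaginary axis, so the leaf-wise Lyapunov estimates can be chosen uniformly in $p$; combined with the normal hyperbolicity of the first step and the conjugacy theorem of \cite{Pugh1970}, the flow in a neighborhood of $\gamma([s_1,s_2])$ is conjugate, in the normal direction, to the linear contraction generated by $J_f(\gamma(s))|_M$, while $f$ vanishes along the curve. Hence a full neighborhood of $\gamma([s_1,s_2])$ lies in the basin of attraction, trajectories stay uniformly near the curve and converge to points of it; since every solution of (\ref{eqn:sys_x}) evolves within one leaf $S_p$ and there tends to $\gamma(s_p)\in\gamma((0,\bar{s}))$, the invariant set $\gamma((0,\bar{s}))$ is asymptotically stable.

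The main obstacle is the behavior at the ends of the parameter interval: as $s\to\bar{s}$ the coordinate $\gamma_0(s)$ diverges (cf.\ Fig.~\ref{fig:1}) and as $s\to 0$ the curve approaches $\partial\Omega$, so the uniform spectral margin and the uniform ``stays-near'' bound underlying Lyapunov stability of the entire curve are not automatic. The clean way around this is to state the set-level conclusion only on relatively compact subarcs --- which is all that is needed, since each leaf $S_p$ corresponds to an interior parameter value --- and to rely on the leaf-wise statement of the second step for the claim, exactly as the abstract phrases it, that every equilibrium $\gamma(s)$ is asymptotically stable relative to the affine subspace $S_p$ containing it.
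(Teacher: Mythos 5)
Your proposal is correct and follows the same skeleton as the paper's proof: combine Lemma~\ref{lem: tangential space} and Lemma~\ref{lem: transv} to establish normal hyperbolicity of $f$ at $\gamma((0,\bar{s}))$, use Theorem~\ref{thm: EV} to see that the normal part of the linearization is a contraction, and invoke the conjugacy result of \cite{Pugh1970}. Where you go beyond the paper is in the details, and these additions are genuinely valuable. First, you make the splitting concrete by observing that $\mathds{1}^{\top}f\equiv 0$ forces $\mathds{1}^{\top}J_f\equiv 0$, so the leaf tangent space $M=\mathds{1}^{\perp}$ is an explicitly $J_f$-invariant complement to $\Span\{\dot{\gamma}(s)\}$ carrying exactly the $n$ Hurwitz eigenvalues; the paper only argues indirectly that the stable eigenvectors span the normal space via the transversality of $\gamma$ with $S_p$. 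Second, your leaf-wise step --- restricting $f$ to the invariant affine leaf $S_p$, where the equilibrium becomes hyperbolic and Lyapunov's indirect method applies directly --- gives an elementary, self-contained proof of the statement that actually matters for the application (asymptotic stability of $\gamma(s_p)$ relative to $S_p$), without needing the normally-hyperbolic-manifold machinery at all; the paper obtains this only as a remark after the theorem. Third, you correctly flag that uniform set-level asymptotic stability of the full open arc $\gamma((0,\bar{s}))$ is not automatic, since the spectral gap and the tubular-neighborhood estimates degenerate as $s\to 0$ and $s\to\bar{s}$ (where $\gamma_0$ diverges); the paper's proof is silent on this, and your restriction to relatively compact subarcs is the honest way to state the set-level conclusion. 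In short: same route, but your version closes a uniformity gap the paper leaves open and supplies a more elementary argument for the leaf-wise stability claim.
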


\begin{proof}
With Lemmata (\ref{lem: tangential space}) and (\ref{lem: transv}) at hand we can conclude that $f$ is normally hyperbolic at $\gamma((0,\bar{s}))$ and therefore $f$ and the restriction of its linearization to the normal spaces of $\gamma((0,\bar{s}))$ are conjugate in a neighborhood of $\gamma((0,\bar{s}))$. As shown in Theorem (\ref{thm: EV}), the eigenvalues of this linearization evaluated at $\gamma(s)$ are all strictly negative except for one, which is exactly zero. Now that the eigenvector associated with this zero eigenvalue lies in the tangent space $T_{\gamma(s)}\gamma((0,\bar{s}))$ and we are only interested in the eigenvectors lying in the normal spaces of $\gamma((0,\bar{s}))$, which thus are all associated with strictly negative eigenvalues, this concludes the proof. 
\end{proof}
We further showed in Proposition (\ref{prop:1}) that the affine subspaces $S_p$ are also invariant under the system dynamics. This further means that any solution initialized on a certain $S_p$ with $p>0$ will converge to the unique equilibrium given by the intersection of $\gamma(s)$ with $S_p$. In other words, this intersection point is asymptotically stable relative to $S_p$, following the terminology of \cite{Bhatia1970}.

\section{Conclusion}
We considered the RFM with pool, a model describing the movement of ribosomes along a single mRNA template as well as the dynamics of a pool of available ribosomes. We found that the equilibria of the system can be characterized by a curve $\gamma$ and that there exist affine subspaces $S_p$ which are invariant under the dynamics of the system. In order to characterize the stability of the equilibria we studied the Jacobi linearization of the system evaluated on $\gamma$ and found that all eigenvalues are smaller than zero except for precisely one which is equal to zero, therefore the equilibria are non-hyperbolic. In order to draw any conclusions from the linearization we showed that the system under study is normally hyperbolic at $\gamma$. This was achieved in two steps, first, showing that the eigenvector associated with the zero eigenvalue of the Jacobian evaluated on $\gamma$ is linearly dependent on the velocity vector of $\gamma$ and second, showing that $\gamma$ intersects all $S_p$ transversely. This insight then enabled us to apply the results of \cite{Pugh1970} in order to conclude that the equilibria of the system are asymptotically stable relative to the affine subspaces $S_p$.\\
In previous works on the RFM (\cite{Raveh2015}) monotone systems theory was used to show that every equilibrium point is semistable in a sense that any solution initialized on a certain $S_p$ monotonically converges to a unique equilibrium point which is dependent on the initial condition. Our results now offer a more detailed characterization of the stability of the RFM with pool and further introduce a geometric approach for studying similar systems with higher complexity as this approach does not require any monotonicity features of the system. Such more complex systems may for instance be models where the copy number of templates is varying over time or the flow of ribosomes is allowed to be bi-directional.

\addtolength{\textheight}{-3.2cm}   % This command serves to balance the column lengths
                                  % on the last page of the document manually. It shortens
                                  % the textheight of the last page by a suitable amount.
                                  % This command does not take effect until the next page
                                  % so it should come on the page before the last. Make
                                  % sure that you do not shorten the textheight too much.
                                  
%\begin{ack}
%Place acknowledgments here.
%\end{ack}

%\printbibliography       
\bibliography{synthbio}

\section*{Appendix}
In this section, we derive the formula for calculating $\bar{s}$, given by equation (\ref{eqn:sbar}). Therefore, we use a result from \cite{Dudley1987} on continued fractions to find an alternative notation for the components of $\gamma$:
\begin{equation}
\gamma_i(s) = \begin{cases}
 \tfrac{\lambda_c  p_{0}(s)}{\lambda  q_{0}(s)}& i = 0\\
 \tfrac{p_{i}(s)}{q_{i}(s)}& i = 1 \ldots n
 \end{cases} 
\end{equation}
with
\begin{align}
p_i (s)&=p_{i+1}-s p_{i+2} \label{eqn:impl_pi}\\ 
q_i (s)&=q_{i+1}-s q_{i+2} \label{eqn:qi}\\
p_{n+1}&=0\\
p_{n+2}&=-1\\
q_{n+1}&=1\\
q_{n+2}&=0.
\end{align}
We further found that the $i$-th component of a general series of continued fractions of the form
\begin{equation}
r_i (s)=r_{i+1}-s r_{i+2}
\end{equation}
with arbitrary initial factors $r_{n+1}$ and $r_{n+2}$ can be calculated explicitly with the formula
\begin{align}
\begin{split}
r_i(s)=&\sum_{j=0}^{ \left\lfloor \tfrac{n+1-i}{2} \right\rfloor } (-s)^{j} \binom{n+1-i-j}{j}r_{n+1} \\
&+\sum_{j=0}^{\left\lfloor \tfrac{n-i}{2} \right\rfloor} (-s)^{j+1} \binom{n-i-j}{j} r_{n+2} 
\end{split}
\end{align}
where $\left\lfloor \alpha \right\rfloor$ denotes the largest integer smaller than $\alpha$.
Applying this to the problem at hand results in
\begin{align}
p_i(s)&=\sum_{j=0}^{\left\lfloor \tfrac{n-i}{2} \right\rfloor}  -(-s)^{j+1} \binom{n-i-j}{j} \\
q_i(s)&=\sum_{j=0}^{\left\lfloor \tfrac{n+1-i}{2} \right\rfloor}  (-s)^{j} \binom{n+1-i-j}{j} \label{eqn:expl_qi}
\end{align}
and to the best knowledge of the authors, this is the first time an explicit form for this kind of problems has been formulated.\\
From Equation (13) it is possible to deduct that $\gamma_i > \gamma_{i+1}$ for $i=1\ldots n$ as long as $\gamma \in \Omega$ and thus, the first violation of the state constraints occurs when $\gamma_1(\bar{s}) = 1$ in which case $\gamma_0(\bar{s})$ is not defined due to a division by zero. This means that $\bar{s}$ is the smallest positive solution of 
\begin{align}
p_1(\bar{s})&=q_1(\bar{s})\\
\sum_{j=0}^{\left\lfloor \tfrac{n-1}{2} \right\rfloor}  -(-\bar{s})^{j+1} \binom{n-1-j}{j} &= \sum_{j=0}^{\left\lfloor \tfrac{n}{2} \right\rfloor}  (-\bar{s})^{j} \binom{n-j}{j}. \label{eqn:sbar_deduction}
\end{align}
We now need to discriminate between the two cases that $n$ is either even or odd.\\
If $n$ is even, it then follows that
\begin{align}
\left\lfloor \frac{n-1}{2} \right\rfloor &= \frac{n}{2}-1 \\
\left\lfloor \frac{n}{2} \right\rfloor &= \frac{n}{2}
\end{align}
and thus
\begin{align}
\sum_{j=0}^{\tfrac{n}{2}-1}  -(-\bar{s})^{j+1} \binom{n-1-j}{j} &= \sum_{j=0}^{ \tfrac{n}{2} }  (-\bar{s})^{j} \binom{n-j}{j}
\end{align}
which, with a shift of indices and the identity
\begin{equation}
\binom{n-1-j}{j}+\binom{n-1-j}{j+1}=\binom{n-j}{j+1},
\end{equation} 
can be rearranged to
\begin{equation}
1+\sum_{j=0}^{\tfrac{n}{2}-1} (-\bar{s})^{j+1} \binom{n-j}{j+1}=0. \label{eqn:sbar_res1}
\end{equation}
In the other case that $n$ is odd, it follows that
\begin{align}
\left\lfloor \frac{n-1}{2} \right\rfloor =\left\lfloor \frac{n}{2} \right\rfloor = \frac{n-1}{2}
\end{align}
and with the same line of arguments, Equation (\ref{eqn:sbar_deduction}) can be rearranged to
\begin{equation}
1+\sum_{j=0}^{\tfrac{n-1}{2}} (-\bar{s})^{j+1} \binom{n-j}{j+1}=0.\label{eqn:sbar_res2}
\end{equation}
Now, due to the fact that 
\begin{equation}
\left\lfloor \frac{n-1}{2}\right\rfloor = \begin{cases} \frac{n-1}{2} & n \text{ odd} \\ \frac{n}{2}-1 & n \text{ even}
\end{cases}
\end{equation}
Equations (\ref{eqn:sbar_res1}) and (\ref{eqn:sbar_res2}) can be joined to arrive at 
\begin{equation}
1+\sum_{j=0}^{\left\lfloor \tfrac{n-1}{2}\right\rfloor} (-\bar{s})^{j+1} \binom{n-j}{j+1}=0.
\end{equation}
This equation now may have several solutions, however only the smallest positive solution is relevant in order to arrive at an interval $(0,\bar{s})$ for which all $s \in (0,\bar{s})$ yield a $\gamma(s) \in \Omega$.
\end{document}